\newtheorem{remark}{Remark}
\newtheorem{proposition}{Proposition}
\title{\LARGE \bf
Safe Decentralized Density Control of Multi-Robot Systems using PDE-Constrained Optimization with State Constraints
}
\author{Longchen Niu and Gennaro Notomista%
\thanks{The authors are with the Department of Electrical and Computer Engineering, University of Waterloo, Waterloo, ON, Canada {\tt\footnotesize {l3niu,gennaro.notomista}@uwaterloo.ca}}%
\thanks{The authors thank Tahmid Sheikh for assisting with drone configuration and experiment setup, and Kaurisan Selvarasa and Alyson Colpitts for their help during testing at the RoboHub.}
}
\begin{document}
\thispagestyle{empty}
\onecolumn   
© 2025 IEEE.  Personal use of this material is permitted.  Permission from IEEE must be obtained for all other uses, in any current or future media, including reprinting/republishing this material for advertising or promotional purposes, creating new collective works, for resale or redistribution to servers or lists, or reuse of any copyrighted component of this work in other works.

Accepted to: 2025 IEEE International Symposium on Multi-Robot and Multi-Agent Systems (MRS)

\newpage
\twocolumn

\maketitle
\thispagestyle{empty}
\pagestyle{empty}

\begin{abstract}
In this paper, we introduce a decentralized optimization-based density controller designed to enforce set invariance constraints in multi-robot systems. By designing a decentralized control barrier function, we derived sufficient conditions under which local safety constraints guarantee global safety. We account for localization and motion noise explicitly by modeling robots as spatial probability density functions governed by the Fokker-Planck equation. Compared to traditional centralized approaches, our controller requires less computational and communication power, making it more suitable for deployment in situations where perfect communication and localization are impractical.
The controller is validated through simulations and experiments with four quadcopters.
\end{abstract}

\section{Introduction}
In recent years, the development of distributed algorithms, together with the increasing affordability of hardware, has promoted the deployment of teams of robots in the real world at scale. One area of focus with a wide range of applicability is
density control, where a team of robots works together to manipulate their spatial distribution to achieve area coverage goals. This paradigm has many promising applications, including wildfire control, environment monitoring, precision agriculture, and search and rescue \cite{Fire, application_enivronment, application_SaR, zhai2021cooperative}. In these applications, a team of robots provides significant advantages with their scalability and robustness in a changing environment. In many scenarios, however, it can be critical for the safety of the robots to avoid certain areas, or mathematically speaking, to enforce set invariance constraints. Studies on computationally efficient ways of enforcing safety constraints have been conducted on teams of robots modeled by Ordinary Differential Equations (ODEs). Examples include \cite{wang2017safety, CBF, safety_car_distance, Optimal_withInvariance, OptimalMain, GassuainVoronoiPathPlanning}.

A traditional centralized controller, where all robots rely on one central computer for commands, is applicable only to smaller-sized multi-robot systems and situations with perfect communication. 
One popular approach leverages mean-field game theory by modeling robots as Probability Density Functions (PDFs) \cite{Elamvazhuthi_Berman_2019}, and has been recently applied to optimal control with state invariance \cite{Optimal_withInvariance, OptimalMain}. 
While these approaches ensure a globally optimal solution in space and time, their dependence on pre-computed control fields limits performance in uncontrolled outdoor applications with unmodeled disturbances, such as animals or wind. 

For applications with a larger number of robots or larger fields, a decentralized controller is a more suitable choice. One popular approach to decentralization is based on the use of Voronoi cells and the underlying Delaunay communication graph, where robots communicate with their immediate neighbors only, as seen in \cite{doi:10.1177/0278364908100177,TERUEL201951, Elamvazhuthi_Berman_2019}, with state-invariance recently proposed in \cite{GassuainVoronoiPathPlanning}. However, these types of controllers rely on precise position measurements of neighbors, which means they are not suitable in environments where localization is affected. Another decentralized approach is with Optimization-Based Controllers (OBCs) due to their fast in-the-loop control using the well-established Control Lyapunov Functions (CLFs) \cite{CLF_PDE_NoDensity}. 
More recently, due to its seamless integration with CLFs, Control Barrier Functions (CBFs) are used to ensure state invariance by enforcing forward invariance of a safe set, keeping robots within a safety boundary \cite{CBF}. 
However, in the field of Partial Differential Equations (PDEs), the application of CBFs remains underexplored. 

In this paper, we propose a novel decentralized OBC for density control of multi-robot systems with state invariance constraints suitable for dynamic and unpredictable environments. 
Our framework integrates both CLFs and CBFs within PDE-governed optimization problems, explicitly accounting for localization and motion noise. 
We derive sufficient conditions for theoretical guarantees of safety and convergence, and validate the controller in simulations and experiments with quadcopters, demonstrating its computational efficiency and practical applicability. 
To our knowledge, this is the first in-the-loop decentralized PDE swarm controller that explicitly incorporates localization and motion noise. Existing density controllers typically assume centralized coordination and perfect sensing, 
leaving no direct baseline for comparison.

\subsection{Related Work and Contributions}
PDE-based density control has gained popularity due to its improved scalability over ODE systems, as it describes the spatial density of the entire team rather than individual trajectories. In our previous work, \cite{RalPaper}, we incorporated both localization and motion noise in a decentralized OBC governed by a Fokker-Planck PDE. 
The effectiveness of the controller is shown with a Lyapunov-based proof, supported by simulations and experiments. In this paper, we further extend this formulation to enforce set invariance for safety.

Recently, mean-field game theory has been extended to optimal control with state invariance
\cite{Elamvazhuthi_Berman_2019, Optimal_withInvariance, OptimalMain}, establishing proofs of stability and optimality. Despite these advances, optimal control strategies face a common drawback: they are unresponsive to unexpected disturbances; any unmodeled forces, such as wind or animals, can invalidate the precomputed control. In contrast, our proposed controller operates within a real-time, decentralized framework that enables adjustments as disturbances arise. 
Another well-known approach to density control uses Voronoi cells, as seen in \cite{doi:10.1177/0278364908100177}, where a decentralized adaptive controller was developed for environmental sensing.
Recently, the authors of \cite{GassuainVoronoiPathPlanning}
developed a Gaussian distribution-based centroidal Voronoi tessellation approach for the density transport problem with known spatial obstacles. While effective, these algorithms require perfect localization, unlike our controller, which achieves safe control even under localization noise.

Recent advances in hardware have enabled fast in-the-loop optimization solvers, accelerating research in decentralized OBCs.
The authors of \cite{CLF_PDE_NoDensity} applied CLFs in PDE-governed robotic swarms for decentralized formation control, but did not incorporate safety constraints.
In the ODE domain, CBFs have become the standard framework for state invariance constraints \cite{CBF}, yet it remains underexplored in the PDE domain.
The work in \cite{boundary_control_cbf} recently employed a CLF-CBF OBC to maintain stability and safety in a traffic flow model through PDE-governed boundary control. In contrast, our work focuses on ensuring multi-robot safety in terms of obstacle avoidance, rather than regulating flow dynamics.

Finally, in \cite{CBF-PDE2ODE}, the authors achieved safe control of a robotic arm with CBF by transforming the PDE-governed structural deformation to ODEs using the Finite Element Method (FEM). While the authors of \cite{PDE-CBF-THermal} formulated a CBF-based back-stepping algorithm for the Stefan model, ensuring safety across both space and time with the Finite Difference Method (FDM). 
Similarly, our framework applies the FDM to obtain an ODE approximation of the Fokker-Planck PDE system. 
Given the smooth Gaussian PDFs and uniform spatial discretizations, the FDM introduces little discretization error and offers superior computational efficiency over FEM.

In this paper, we propose a decentralized OBC for multi-robot density control with state invariance governed by the Fokker-Planck PDE. Unlike prior approaches that assume perfect localization or neglect unmodeled disturbances, our controller offers a decentralized density control framework that adjusts to the current environment in real time, even with sensor noise. By integrating CBFs to ensure forward invariance, this controller enforces safety, improves robustness against disturbances, and remains computationally efficient. 

\section{Mathematical Background}
\label{sec: math background}
Our goal is to match the spatial density of \(N\) robots to a desired target distribution while enforcing state constraints for danger areas. Building on our previous work \cite{RalPaper}, with the averaging technique from \cite{Archer2004DynamicalDF, ANNUNZIATO2013487} and Itô calculus, we obtain the Fokker-Planck PDE governing the robot density dynamics with control fields \(u =(u_1,\dots,u_N)^T \in \mathcal{U}^N\):
\begin{equation}
    \frac{\partial \rho(r, t)}{\partial t} = \sum_{i \in N} \bigg[-\nabla_i(u_i \rho_i(r,t)) + T\Delta_i \rho_i(r, t) \bigg],
    \label{eq:FP_equation}
\end{equation}
where each \(u_i\) is a 2D vector characterizing a uniform control field for robot \(i \in \{1, \dots ,N\}\) within the admissible control set \(\mathcal{U} := \{v \in \mathbf{R}^2 : \|v\|_\infty \le u_{\max}\}\). The diffusion constant \(T\) is proportional to the maximum magnitude of uniform Gaussian motion noise \cite{RalPaper}. The total robot density \(\rho(r, t)\) at position \(r\) and time \(t\) is represented by a sum over \(N\) robots' Gaussian PDFs centered at robot positions \(x_i\):
\(
    \rho(r, t) = \sum_{i=1}^{N} \exp\left(-\frac{1}{2}(r - x_i)^T \Sigma (r - x_i)\right),
\)
where the covariance matrix \(\Sigma\) represents a robot's position uncertainty, assumed identical for all agents. Importantly, equation \eqref{eq:FP_equation} is deterministic, as it characterizes the swarm density distribution from a statistical perspective rather than the stochastic trajectories of individual robots.

With a model for the density of robots, we first formulate a centralized OBC. Then, in the next section, we present the decentralized controller with sufficient conditions for set-invariance guarantee. 
Developments in CBFs have provided a structural framework that is similar to the well-known CLFs. Naturally, an optimization-based framework with both CLF and CBFs has been studied and shown great success in \cite{CBF, wang2017safety}. Following the CLF-CBF optimization framework, we define the CLF as a measure of deviation from a possibly time-dependent target density \(\rho_d(r,t)\) in the spatial field \(\Omega \subset \mathbb{R}^2\). To ensure safety, a barrier function is designed to limit the distribution of robots within a danger area \(\mathcal{A} \subset \Omega\) by a threshold \(\epsilon > 0\). Specifically: 
\begin{equation}
    V(\rho) = \|\rho_d - \rho \|^2_{L^2(\Omega)}, \quad h(\rho) = \epsilon - \|\rho\|^2_{L^2(\mathcal{A})}.
    \label{eq: v_h def}
\end{equation}
With these measurements defined, the centralized CLF-CBF optimization-based control problem is formulated \cite{CBF}:
\begin{equation}
    \begin{aligned}
    \min_{u\in \mathcal{U}^N,\,s \in \mathbf{R}^N} &\quad \sum_{i=1}^N \big( \|u_i\|^2 + \gamma s_i^2 \big)\\
    \quad \text{s.t.} &\quad \alpha V(\rho) + \dot{V}(\rho,u) - s\leq 0\\
    &\quad \beta h(\rho) + \dot{h}(\rho,u) \geq 0,
\end{aligned}
\label{eq: Centralized Full model}
\end{equation}
where \(s_i, \gamma \geq 0\) are the slack variable and scaling factor, respectively, used to prioritize CBF over CLF, \(\alpha, \beta > 0\) are positive coefficients, and \(\dot V, \dot h\) are time derivatives of \eqref{eq: v_h def}. While this centralized controller ensures global safety, it requires real-time communication with all robots, which is often unrealistic in large fields or with a large team of robots. To address these limitations, we introduce a decentralized controller in the next section. 

\section{Decentralized Control}
A standard procedure towards the decentralization of safety constraints is to divide the scalar threshold \(\epsilon\) evenly between \(N\) robots:
\(
 h_i = \frac{\epsilon}{N} - \int_\mathcal{A} \rho_i \sum_{j=1}^N \rho_j dr.
\)
While this makes mathematical sense, \(\sum_i^N h_i = h\), enforcing this constraint requires every robot to have real-time global knowledge of all robots. This effectively implies that global communication and localization are available among the whole team, defeating the purpose of decentralization, as discussed above. 
%If global knowledge is available for all robots, a traditional centralized approach is more effective. However, in practice, this level of connectivity is often unrealistic in large fields or with a large number of agents. 
In this section, we first propose a framework for a decentralized controller that does not require global communications, and then, a formal safety guarantee for the robot density with the controller is derived.

\subsection{Decentralization Formulation}
\label{subsec: Decentralize formulation}
In application, robots only have local communication capabilities with physically close neighbors within some distance \(D\). We define the neighbors of robot \(i\) with a distance-based communication graph, \(\Delta\)-disk graph \cite{mesbahi2010graph}, as:
\(\mathcal{N}_i = \left\{ j \in N\, \middle| \, \| {x}_j - {x}_i \| \leq D \right\}\), and \(\mathcal{N}_i^0 = \mathcal{N}_i \setminus \{i\}\).

Position data is passed through the communication links between each robot and its neighbors. Intuitively, if robot \(i\) knows its neighbors are far from danger, then \(i\) should be allowed to get closer to \(\mathcal{A}\) as the threshold \(\epsilon\) is shared globally. Instead of using a uniformed threshold of \(\frac{\epsilon}{N}\), robots can use a relaxed threshold \( N_i \frac{\epsilon}{N}\), where \(N_i = |\mathcal{N}_i|\) is the number of neighbors of robot \(i\). We define the decentralized safety barrier and Lyapunov function as:
\begin{equation}
    h_i = N_i \frac{\epsilon}{N} - \int_\mathcal{A} \rho_{\mathcal{N}_i}^2 dr, 
    \quad V_i = \int_\Omega (\rho_d - \rho_{\mathcal{N}_i})^2dr,
\label{eq: h_i v_i defination}
\end{equation}
where \(\rho_{\mathcal{N}_i} = \sum_{j\in{\mathcal{N}_i}}\rho_j\). 
Here, safety is calculated on a neighborhood level rather than an individual robot level, relaxing the safety threshold if the neighbors are far from \(\mathcal{A}\). Its relation to global safety \eqref{eq: v_h def} is shown in Proposition~\ref{prop: main}.

However, one crucial question arises: Is the safety of the entire team guaranteed using a decentralized OBC with \eqref{eq: h_i v_i defination}, which does not consider the motion of neighbors? The neighbors' optimized motion should not be communicated as it would result in a distributed optimization network, which requires more computing power and communication overhead \cite{distributed_opt_survey}. 
To include the motion of neighbors without explicit communication, we define a conservative worst-case prediction, \(\dot \rho_{j,i}^\mathbf{neighbor}\), for each neighbor \(j\) of robot \(i\). This worst-case behavior is one that causes the largest decrease in \(h_i\)---corresponding to the most unsafe action that robot \(i\)'s neighbors can take; thus, we can form the following optimization problem:
\begin{equation}
    \begin{aligned}
    \min_{u_{j,i} \in \mathcal{U}, \, j \in \mathcal{N}_i^0}  &\quad \beta_i h_i + \dot h_i^\mathbf{neighbor}
     %\quad \text{s.t.} &\quad \|u_{j,i}\|_\infty \leq u_{\max},
    \end{aligned}
    \label{eq: opt_step1}
\end{equation}
where \(\beta_i \geq 0\) as the individual CBF coefficients, with  
\[
\dot{h}_i^{\mathbf{neighbor}} = \sum_{j\in{\mathcal{N}_i^0}} \dot h_{j,i} = -2 \sum_{j\in{\mathcal{N}_i^0}} \int_\mathcal{A} \rho_{\mathcal{N}_i} \dot \rho_{j,i}^\mathbf{neighbor} dr
\]
as the prediction of all neighbors of \(i\) moving in the most unsafe direction. The time derivative of densities, \(\dot \rho_{j,i}\), follows \eqref{eq:FP_equation} with \(u_{j,i}\), robot \(j\)'s control predicted by \(i\), in the admissible set \(\mathcal{U}\).
Conversely, we can find the maximum increase in safety by robot \(i\), and define \(\dot h_i^\mathbf{self} = \max -2\int_\mathcal{A} \rho_{\mathcal{N}_i} \dot \rho_{i}dr\).

Notably, there are edge cases where the safest control by robot \(i\) cannot neutralize the most aggressive movement of its neighbors. To quantify this, we define a variable \(\delta_i = \min \{0, \lambda_i + \dot h_i^\mathbf{self}\}\), with \(\lambda_i\) as the objective value of \eqref{eq: opt_step1}. The relaxation variable \(\delta_i\) ensures the feasibility of each CBF. This is used in the main optimization problem:
\begin{equation}
    \begin{aligned}
    \min_{u_i\in \mathcal{U},\,s_i\in \mathbf{R}} &\quad \|u_i\|^2 + \gamma s_i^2\\
     \quad \text{s.t.} &\quad \alpha_i{V}_i + \dot{V}_{i} - s_i\leq 0\\
    &\quad \beta_i {h}_{i} + \dot{h}_{i} \geq \delta_i,
    \end{aligned}
    \label{eq: opt_main}
\end{equation}
where \(\alpha_i \geq 0\) is the CLF weight.
If robot \(i\) cannot counteract all neighbors' most unsafe movement, \(\delta_i < 0\) will relax the CBF to ensure feasibility with the safest command. Otherwise, \(\delta_i = 0\) is the standard CBF allowing robot \(i\) to get closer to \(\mathcal{A}\) to match the target PDF, \(\rho_d\). 
This relaxation encodes the worst-case assumption for all neighbors to account for limited information in practice. If all robots act accordingly, assuming the most unsafe neighbor behaviors, then the swarm remains collectively safe. The flowchart of the proposed framework can be seen in Fig.~\ref{fig: flow diagram}.
\begin{figure}
    \centering
    \includegraphics[width=\linewidth]{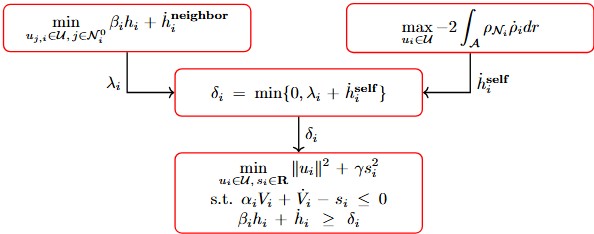}
    \caption{Schematic for the safe decentralized OBC. Each robot \(i\) computes the worst-case safety threat, \(\lambda_i\), from its neighbors, and the maximum safety improvement, \(\dot h_i^\mathbf{self}\), by itself. These parameters determine if a slack \(\delta_i\) is necessary for the CBF feasibility in controller \eqref{eq: opt_main}. }
    \label{fig: flow diagram}
\end{figure}

\subsection{Safety Guarantee}
\label{sec: proof} 
In this section, we show the sufficient conditions such that the sum of the decentralized CBF, \(\beta_i {h}_{i} + \dot{h}_{i} \geq \delta_i\), in \eqref{eq: opt_main} guarantees the centralized global CBF, \(\beta h + \dot{h} \geq 0\), in \eqref{eq: Centralized Full model}.

\begin{proposition}[Sufficient Condition for Local-to-Global Safety Guarantee]
\label{prop: main}
    Let each robot \(i \in \{1, \dots ,N\}\) implement the proposed controller \eqref{eq: opt_main}. Then, if every robot satisfies the bound below (e.g., by tuning \(\beta, \beta_i, D, \epsilon \dots\))
    \begin{equation}
      2 \int_\mathcal{A} \dot\rho_i  \rho_{\bar{\mathcal{N}}_i} \ dr \leq \beta \bigg( \frac{\epsilon}{N} - \int_\mathcal{A}\rho_i \big( \rho_{\mathcal{N}_i} + \rho_{\bar{\mathcal{N}}_i} \big) dr \bigg) + \dot h_i^\mathbf{self},
      \label{eq: final bound}
    \end{equation}
    then the global CBF condition,
    \(
    \beta h + \dot h \geq 0,
    \)
    is guaranteed to hold. 
\end{proposition}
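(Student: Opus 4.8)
The plan is to certify the global condition by an additive, per-robot accounting: I would first rewrite the global residual \(\beta h + \dot h\) as a sum of \(N\) local contributions, and then bound each contribution from below using \eqref{eq: final bound} together with the constraint enforced by the controller \eqref{eq: opt_main}. Concretely, substituting \(h = \epsilon - \int_\mathcal{A}\rho^2\,dr\) and \(\dot h = -2\int_\mathcal{A}\rho\dot\rho\,dr\) from \eqref{eq: v_h def}, and using \(\rho = \sum_i \rho_i\) to split the quadratic term as \(\int_\mathcal{A}\rho^2\,dr = \sum_i \int_\mathcal{A}\rho\rho_i\,dr\) and the threshold as \(\epsilon = \sum_i \frac{\epsilon}{N}\), the residual becomes \(\beta h + \dot h = \sum_i R_i\) with \(R_i = \beta\frac{\epsilon}{N} - \beta\int_\mathcal{A}\rho\rho_i\,dr - 2\int_\mathcal{A}\rho\dot\rho_i\,dr\). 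It then suffices to certify \(\sum_i R_i \ge 0\), and I would aim for the stronger per-robot statement \(R_i \ge 0\).

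The second step exploits the fact that the bound mixes neighbor and non-neighbor densities. For each \(i\) I would write \(\rho = \rho_{\mathcal{N}_i} + \rho_{\bar{\mathcal{N}}_i}\), where \(\rho_{\bar{\mathcal{N}}_i} = \sum_{j \notin \mathcal{N}_i}\rho_j\) is the density robot \(i\) cannot observe. Substituting this into \(R_i\) separates an observable part depending only on \(\rho_{\mathcal{N}_i}\) and \(\dot\rho_i\) — exactly the quantities appearing in the decentralized functions \eqref{eq: h_i v_i defination} — from an unobservable coupling collecting the cross terms \(-\beta\int_\mathcal{A}\rho_i\rho_{\bar{\mathcal{N}}_i}\,dr\) and \(-2\int_\mathcal{A}\rho_{\bar{\mathcal{N}}_i}\dot\rho_i\,dr\). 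Rearranging \eqref{eq: final bound} is precisely the assertion that this unobservable coupling, augmented by the self-capacity \(\dot h_i^\mathbf{self}\), is non-negative; this is where the tuning of \(\beta, D, \epsilon\) enters, since enlarging \(D\) shrinks \(\rho_{\bar{\mathcal{N}}_i}\) on \(\mathcal{A}\) and makes the bound attainable. The net effect is \(R_i \ge -\dot h_i^\mathbf{self} - 2\int_\mathcal{A}\rho_{\mathcal{N}_i}\dot\rho_i\,dr\).

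The third step removes the leftover \(-\dot h_i^\mathbf{self}\) using the controller. The decentralized CBF constraint \(\beta_i h_i + \dot h_i \ge \delta_i\) in \eqref{eq: opt_main}, once the worst-case neighbor term \(\dot h_i^\mathbf{neighbor}\) is unwound, lower-bounds the self-contribution \(-2\int_\mathcal{A}\rho_{\mathcal{N}_i}\dot\rho_i\,dr\); and the definitions \(\lambda_i\) (the optimum of \eqref{eq: opt_step1}) and \(\delta_i = \min\{0,\lambda_i + \dot h_i^\mathbf{self}\}\) are engineered so that, in the binding case \(\delta_i < 0\), this lower bound exactly cancels the \(-\dot h_i^\mathbf{self}\) term and yields \(R_i \ge 0\). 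Summing over \(i\) then gives \(\beta h + \dot h = \sum_i R_i \ge 0\). Throughout, I would use that the true neighbor contribution to \(\dot h_i\) is no smaller than the conservative prediction \(\dot h_i^\mathbf{neighbor}\), so that the controller's guarantee under worst-case neighbors transfers to the actual trajectory.

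I expect this third step to be the main obstacle. The delicate point is that \(\dot h_i^\mathbf{self}\) is a maximal capacity (the safest self-action), whereas the control returned by \eqref{eq: opt_main} trades safety against the CLF objective, so the realized self-contribution can fall short of \(\dot h_i^\mathbf{self}\); closing that gap forces a careful case split on \(\delta_i\) and hinges on the relaxation being defined exactly as in \eqref{eq: opt_main}. A secondary bookkeeping difficulty is that \(h_i\) is built from \(\rho_{\mathcal{N}_i}^2\), which double-counts pairwise neighbor overlaps relative to the global \(\rho^2\); I would need to verify that this over-counting is absorbed by the relaxed threshold \(N_i\frac{\epsilon}{N}\) and does not corrupt the per-robot cancellation.
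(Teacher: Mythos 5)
Your plan is essentially the paper's proof, just bookkept per robot instead of at the swarm level: the paper first forms the difference \(\dot h - \sum_i \dot h_i\), uses the identity \(\dot h_i = -2\int_\mathcal{A}\rho_{\mathcal{N}_i}\dot\rho_i\,dr + \lambda_i - \beta_i h_i\) from \eqref{eq: opt_step1} to reduce the goal to \eqref{eq: ineq goal expanded}, and only at the very end splits \(h\) into the per-robot shares \(\frac{\epsilon}{N}-\int_\mathcal{A}\rho_i(\rho_{\mathcal{N}_i}+\rho_{\bar{\mathcal{N}}_i})\,dr\) appearing in \eqref{eq: final bound}; your \(R_i\) decomposition reaches the same ledger from the other end. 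The substantive remark concerns the obstacle you flag in your third step, and your instinct there is correct: substituting the controller constraint gives only \(-2\int_\mathcal{A}\rho_{\mathcal{N}_i}\dot\rho_i\,dr \ge \delta_i-\lambda_i = \min\{-\lambda_i,\,\dot h_i^\mathbf{self}\}\), which equals \(\dot h_i^\mathbf{self}\) precisely in the binding case \(\delta_i<0\). In the non-binding case \(\delta_i=0\), the available inequality \(\lambda_i\ge-\dot h_i^\mathbf{self}\) points the unhelpful way (it upper-bounds \(-\lambda_i\) by \(\dot h_i^\mathbf{self}\) rather than lower-bounding it), so \(R_i\ge-\dot h_i^\mathbf{self}-\lambda_i\) is not manifestly nonnegative from \eqref{eq: final bound} alone. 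The paper dispatches both cases with the single assertion that ``the lower bounds in both cases are the same,'' which handles the binding case cleanly but leaves the non-binding case resting on exactly the step you identified as delicate; you have not missed an idea that the paper supplies, you have located the point where the published argument is thinnest. Your secondary worry about double-counting of neighbor overlaps in \(\rho_{\mathcal{N}_i}^2\) is a non-issue for this proposition: the proof never needs \(\sum_i h_i = h\), because the \(\beta_i h_i\) terms cancel identically once the controller constraint is substituted into \(\dot h_i\), so the relaxed thresholds \(N_i\frac{\epsilon}{N}\) never enter the global balance.
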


\begin{proof}
    We know \(\dot h_i \geq \delta_i- \beta_i h_i\) is the solution of \eqref{eq: opt_main}, which implies \(\sum_{i=1}^{N} \dot h_i \geq \sum_{i=1}^{N} ( \delta_i - \beta_i h_i)\). To show that \(\dot h \geq -\beta h\) in \eqref{eq: Centralized Full model} by \eqref{eq: opt_main}, the below equation is sufficient:
    \begin{equation}
    \dot h - \sum_{i=1}^{N} \dot h_i \geq -\beta h - \sum_{i=1}^{N} ( \delta_i - \beta_i h_i).
    \label{eq: ineq goal}
    \end{equation}
By definition: \(
    \dot h =-2 \sum_{i=1}^{N} \int_\mathcal{A} \big(\rho_{\mathcal{N}_i} + \rho_{\bar{\mathcal{N}}_i} \big)\dot\rho_i\ dr,
    \)
    where \(\bar{\mathcal{N}}_i\) is all robots not in the detection radius, \(D\), of \(i\). By \eqref{eq: opt_step1}, we have: \( 
    \dot h_i = -2 \int_\mathcal{A} \rho_{\mathcal{N}_i} \big(\dot \rho_i + \sum_{j \in \mathcal{N}_i^0} \dot\rho_{j,i}^\mathbf{neighbor} \big)dr \, 
    = -2\int_\mathcal{A} \rho_{\mathcal{N}_i} \dot \rho_i \ dr + \lambda_i - \beta_i h_i.
    \)
This simplifies the LHS of \eqref{eq: ineq goal}:
\begin{equation}
        \dot h - \sum_{i=1}^{N} \dot h_i\ = 
         -2 \sum_{i=1}^{N} \int_\mathcal{A} \dot\rho_i  \rho_{\bar{\mathcal{N}}_i} \ dr - \sum_{i=1}^{N} \bigg( \lambda_i - \beta_i h_i \bigg),
\end{equation}
which means \eqref{eq: ineq goal} is equivalent to:
\begin{equation}
    \sum_{i=1}^{N} \bigg[ 2\int_\mathcal{A} \dot\rho_i  \rho_{\bar{\mathcal{N}}_i} \ dr + \lambda_i \bigg]
     \leq \beta h + \sum_{i=1}^{N} \delta_i.
\label{eq: ineq goal expanded}
\end{equation}

From the definition of \(\delta_i\), we know either \(\delta_i  = \lambda_i + \dot h_i^\mathbf{self} \), or \(\lambda_i \geq - \dot h_i^\mathbf{self}\), when \(\delta_i = 0\).
The lower bounds in both cases are the same, giving a swarm-level sufficient bound that guarantees \eqref{eq: ineq goal expanded}, which is equivalent to \eqref{eq: ineq goal}, implying the global CBF condition we want to show:
\begin{equation}
    2\sum_{i=1}^{N} \int_\mathcal{A} \dot\rho_i  \rho_{\bar{\mathcal{N}}_i} \ dr \leq \beta h + \sum_{i=1}^{N} \dot h_i^\mathbf{self}.
    \label{eq: final bound swarm}
\end{equation}

To obtain an individual level bound, \(h\) can be divided as 
\(
h = \sum_{i=1}^{N} \big( \frac{\epsilon}{N} - \int_\mathcal{A}\rho_i ( \rho_{\mathcal{N}_i} + \rho_{\bar{\mathcal{N}}_i} ) dr \big),
\)
which leads to a tighter individual robot bound \eqref{eq: final bound}. 
\end{proof}

\begin{remark}
Proposition~\ref {prop: main} establishes a sufficient condition for global safety by the proposed controller. Furthermore, under the assumption that the system starts in a safe state, \(h \geq 0\), this condition ensures forward invariance of the set \(\{\rho: h( \rho)\ge0\}\). While it is impractical for every robot to compute this bound in real-time, as it requires knowledge of non-neighbors, this bound can be used for simulation-based tuning before deployment and offline verification afterward.     
\end{remark}

\begin{remark}
\label{remark simple bound}
Since the swarm starts in a safe state, \(h \ge 0\), the swarm-level safety condition, \eqref{eq: final bound swarm}, simplifies to:
    \begin{equation}
\label{eq:avg-L2-condition}
\sum_{i=1}^N \big\|\rho_{\bar{\mathcal N}_i}\big\|_{L^2(\mathcal A)}
\;\le\; 
\sum_{i=1}^N \big\|\rho_{\mathcal N_i}\big\|_{L^2(\mathcal A)},
\end{equation}
by applying the Cauchy–Schwarz inequality to both sides, noting that the RHS holds as equality by construction. Then, a sufficient condition for safety is that, on average, neighbors remain closer to the danger zone than non-neighbors.
\end{remark}

The individual bound \eqref{eq: final bound} is also intuitive. 
The LHS quantifies the risk of the commanded control on non-neighbors, \(\rho_{\bar{\mathcal{N}}_i}\). The RHS sets the upper bound with the current safety level and potential improvements, incorporating the global CBF coefficient \(\beta\), safety threshold \(\epsilon\), robot's spatial configuration, and its achievable improvement in safety \(\dot h_i^\mathbf{self}\). While the individual \(\beta_i\) isn't explicit, it appears in \(\delta_i\), which leads to a more relaxed bound than \(\dot h_i^\mathbf{self}\). Clearly, higher \(\beta, \epsilon, D\), or lower \(\beta_i\) all relax the inequality \eqref{eq: final bound}. Additionally, the sparsity of the detection-based communication graph also plays a critical role, since a larger detection radius reduces \(\rho_{\bar{\mathcal{N}}_i}\) in Remark~\ref{remark simple bound}. At the extreme, when detection covers the whole field, the summation bound \eqref{eq: final bound swarm} is always satisfied, resembling a centralized approach. 
However, the controller may fail near complex boundaries of \(\mathcal{A}\). If a robot is closely surrounded by \(\mathcal{A}\), i.e., trapped in a small ring of fire, diffusion alone can violate the safety constraint even in a centralized setting. In such situations, no control is safe under the current Fokker-Planck equation formulation with constant diffusion \eqref{eq:FP_equation}. This observation highlights the importance of the spatial structure of \(\mathcal{A}\) and motivates the following remark. 

\begin{remark}
    As stated in Section~\ref{sec: math background}, the Fokker-Planck equation \eqref{eq:FP_equation} contains a diffusion coefficient \(T\) representing the random motion noise of a robot. We define a constant \(T = 0.3^2 u_\mathbf{max}/ 2 = 0.045 u_\mathbf{max}\), which ensures that \(99\%\) of motion noise lies in \([-u_\mathbf{max}, u_\mathbf{max}]\). A variable \(T(u) = 0.045 u\) can more accurately represent the command-dependent motion noise. We decided to use a constant maximum noise for two reasons: it is computationally cheaper and captures other forces, such as inertia in a quadcopter, which persists even if the commanded motion is zero. 
\end{remark}

\begin{figure*}
    \subfloat{\includegraphics[width=0.165\linewidth]{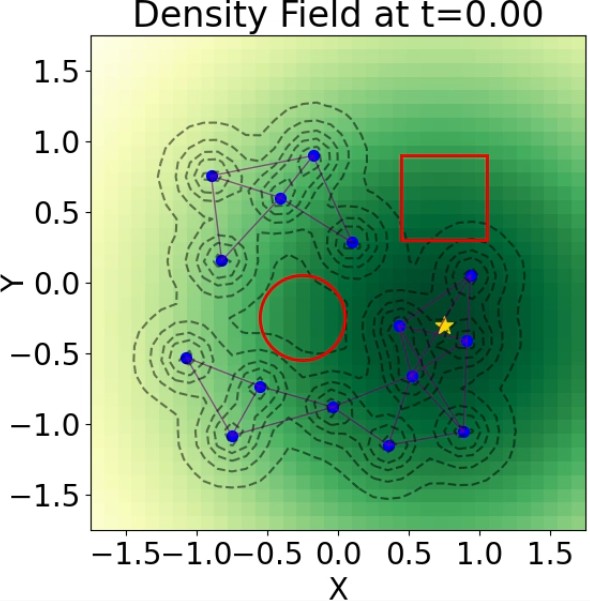}}%
    \subfloat{\includegraphics[width=0.165\linewidth]{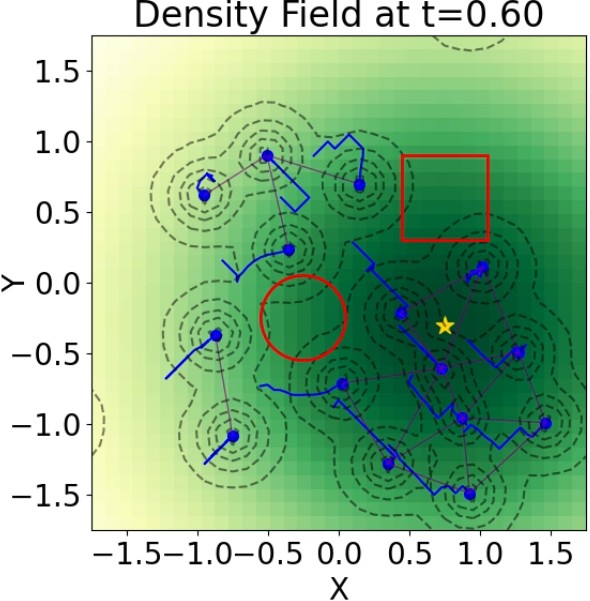}}%
    \subfloat{\includegraphics[width=0.165\linewidth]{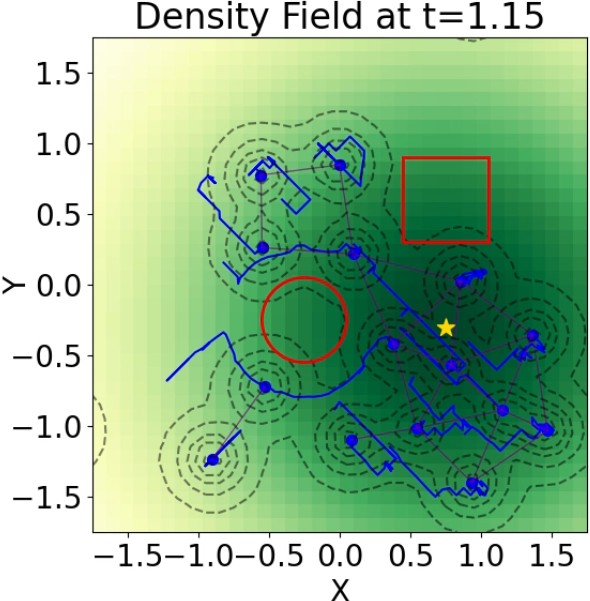}}%
    \subfloat{\includegraphics[width=0.165\linewidth]{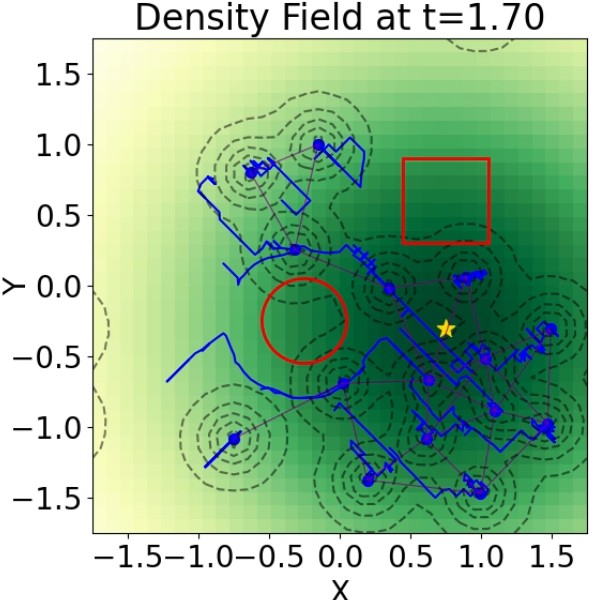}}%
    \subfloat{\includegraphics[width=0.165\linewidth]{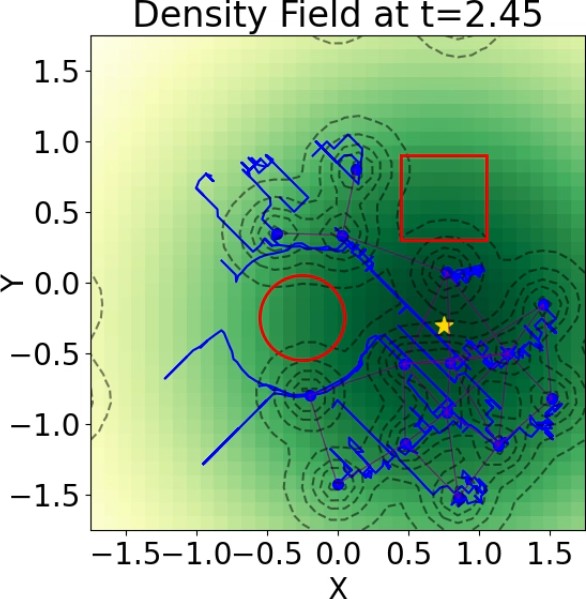}}%
    \subfloat{\includegraphics[width=0.165\linewidth]{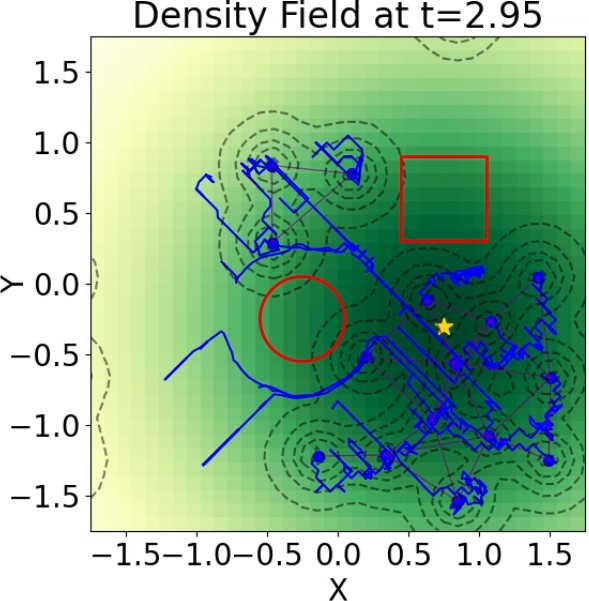}}
    \caption{Time sequence of the simulation. The star indicates the center of the target, green is the target density (darker is higher), and red is the invariance area. Blue represents robots' positions with trajectory trails, purple represents communication links, and the dashed contour is the robot's PDF.}
    \label{fig:Swarm_Time_Sequence}
\end{figure*}

\subsection{Numerical Implementation}
To implement the controller, we approximate the infinite-dimensional PDEs in \eqref{eq: opt_main} as ODEs using a central FDM. On a uniform \(N_x \times N_y\) spatial field with grid size \(l\), the infinite-dimensional \(\rho_i, \rho_d\in L^2(\Omega)\) are approximated and flattened to 1D as \(\hat{\rho_i}, \hat{\rho}_d \in \mathbf{R}^{N_d}\) with \(N_d = N_x N_y\). Then, the optimization problem \eqref{eq: opt_main} becomes:
\begin{equation}
    \begin{aligned}
    \min_{\substack{u_i \in \mathcal{U} , s_i \in \mathbf{R}}}
    &\quad \|u_i\|^2 + \gamma s_i^2\\
    \text{s.t.} \quad& l^2 \sum_{k=1}^{N_d} \bigg[\big(\hat{\rho}_d^k - \hat{\rho}_{\mathcal{N}_i}^k\big)^2 -2 \big(\hat{\rho}_d^k - \hat{\rho}_{\mathcal{N}_i}^k\big)\hat{\rho}_{i}^{t,k}\bigg] \leq s_i
    \\
    & \beta_i\big[N_i \frac{\epsilon}{N} - l^2 \sum_{k \in \mathcal{A}}(\hat{\rho}_{\mathcal{N}_i}^k)^2\big] \\
    &  - 2 l^2 \sum_{k \in \mathcal{A}} \big( \hat{\rho}_{\mathcal{N}_i}^k \big(
    \hat{\rho}_{i}^{t,k} +
    \sum_{j \in \mathcal{N}_i^0} \hat{\rho}_{j,i}^{t,\mathbf{neighbor}, k}\big)\big) \geq \delta_i,
    \end{aligned}
    \label{eq: opt_main_discritized}
\end{equation}
where the superscript \(k\) denotes the 1D grid index, the flattened representation of the 2D field in code. The superscript \(t\) is the approximation of the Fokker-Planck equation \eqref{eq:FP_equation}:
\[
\hat{\rho}_{j,i}^t = -(u_{j,i}^x A_x + u_{j,i}^y A_y)\hat{\rho}_j + T \, B \hat{\rho}_j.
\]
The constant sparse matrices \(A_x, A_y, B  \in \mathbf{R}^{N_d \times N_d}\) are derived from central difference scheme corresponding to\(\nabla_x \rho, \nabla_y \rho\) and \(\Delta \rho\). The control input \(u_{j,i}^x, u_{j,i}^y\) denotes robot \(i\)'s prediction of robot \(j\)'s velocity in the \(x,y\) directions, respectively. Thanks to the ODE approximation, \eqref{eq: opt_main_discritized} is a strictly convex optimization problem with a quadratic objective and two linear constraints in \(u_i\), allowing for rapid computation in real-life applications.

\begin{proposition}[Existence, Uniqueness, and Safety of the Discretized Decentralized Controller]
    Following the proposed framework, the constrained optimization problem \eqref{eq: opt_main_discritized} is always feasible. Under Proposition~\ref{prop: main}, its unique solution \(\hat{u}_i^*\) guarantees the forward invariance of the set $\{\hat \rho_i : \hat h \geq 0\}$, where \(\hat h\) is discretized safety barrier \(h\).
\end{proposition}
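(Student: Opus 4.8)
The plan is to establish the three claims --- feasibility, existence/uniqueness, and forward invariance --- in that order, since each builds on the previous one.

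\textbf{Feasibility.} First I would argue that the slack variable $s_i$ and the relaxation $\delta_i$ are engineered precisely to keep the feasible set nonempty. Because $s_i$ is unconstrained in sign and appears alone on the right-hand side of the CLF constraint, any admissible $u_i \in \mathcal{U}$ can be made feasible for that constraint by choosing $s_i$ large enough; hence the CLF constraint never obstructs feasibility. For the CBF constraint I would substitute the identity $\dot h_i = -2\int_\mathcal{A}\rho_{\mathcal{N}_i}\dot\rho_i\,dr + \lambda_i - \beta_i h_i$, established in the proof of Proposition~\ref{prop: main} from the neighbor-prediction problem \eqref{eq: opt_step1}, which collapses $\beta_i h_i + \dot h_i \ge \delta_i$ to $-2\int_\mathcal{A}\rho_{\mathcal{N}_i}\dot\rho_i\,dr + \lambda_i \ge \delta_i$. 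The largest value the left-hand side can attain over $u_i \in \mathcal{U}$ is $\dot h_i^{\mathbf{self}} + \lambda_i$ by definition of $\dot h_i^{\mathbf{self}}$, and since $\delta_i = \min\{0,\,\lambda_i + \dot h_i^{\mathbf{self}}\} \le \lambda_i + \dot h_i^{\mathbf{self}}$, the safest control already satisfies the constraint in every configuration. This careful bookkeeping with the definitions of $\lambda_i$, $\dot h_i^{\mathbf{self}}$, and $\delta_i$ is the step I expect to demand the most attention.

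\textbf{Existence and uniqueness.} Next I would observe that \eqref{eq: opt_main_discritized} is a strictly convex quadratic program. The FDM relation $\hat\rho_i^t = -(u_i^x A_x + u_i^y A_y)\hat\rho_i + T B\hat\rho_i$ is affine in $u_i$, and the neighbor predictions $\hat\rho_{j,i}^{t,\mathbf{neighbor}}$ are fixed constants carried over from the prior stage, so both constraints are affine in $(u_i, s_i)$. The objective $\|u_i\|^2 + \gamma s_i^2$ is strictly convex for $\gamma > 0$ and coercive in $s_i$, while $u_i$ is confined to the compact box $\mathcal{U}$. The feasible set is therefore closed, convex, and nonempty by the feasibility step, so the coercive strictly convex objective attains its infimum at a unique minimizer $\hat u_i^*$. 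I would cite standard convex-optimization results here rather than reprove existence and uniqueness from scratch.

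\textbf{Forward invariance.} Finally I would invoke Proposition~\ref{prop: main}. By construction $\hat u_i^*$ satisfies the per-robot discretized CBF constraint $\beta_i \hat h_i + \dot{\hat h}_i \ge \delta_i$. The derivation of Proposition~\ref{prop: main} is purely algebraic in the barrier quantities and their time derivatives, so replacing each $\int_\mathcal{A}(\cdot)\,dr$ by the Riemann sum $l^2\sum_{k\in\mathcal{A}}(\cdot)$ leaves its conclusion intact: under bound \eqref{eq: final bound}, the local constraints aggregate to the global condition $\beta \hat h + \dot{\hat h} \ge 0$. With $\hat h(0)\ge 0$, the comparison lemma applied to the differential inequality $\dot{\hat h} \ge -\beta \hat h$ gives $\hat h(t) \ge \hat h(0)\,e^{-\beta t}\ge 0$ for all $t\ge 0$, which is exactly forward invariance of $\{\hat\rho : \hat h \ge 0\}$. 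The only point genuinely new beyond Proposition~\ref{prop: main} is confirming that the FDM discretization preserves its algebraic identities term by term; this is the secondary obstacle, though a routine one given the uniform grid.
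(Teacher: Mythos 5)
Your proposal is correct and follows essentially the same route as the paper's proof: the CLF constraint is neutralized by the free slack $s_i$, the CBF constraint is satisfiable by the safest control thanks to the construction of $\delta_i$, strict convexity gives uniqueness, and Proposition~\ref{prop: main} yields forward invariance. You simply make explicit several steps the paper leaves implicit --- the $\delta_i \le \lambda_i + \dot h_i^{\mathbf{self}}$ bookkeeping, coercivity/compactness for existence, and the comparison-lemma argument --- which strengthens rather than changes the argument.
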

\begin{proof}
    By construction, following the framework in Fig.~\ref{fig: flow diagram}, the CBF constraint is always feasible with at least one \(\hat u_i\) that yields \(\hat h_{i}^{t,\mathbf{self}}\). The CLF constraint, the first constraint in \eqref{eq: opt_main_discritized}, is always satisfied by relaxing the slack variable \(s_i\). Therefore, \eqref{eq: opt_main_discritized} is always feasible and satisfies the CBF constraint to ensure forward invariance of the set $\{\hat \rho_i: \hat h \geq 0\}$ by the bounds \eqref{eq: final bound} or \eqref{eq: final bound swarm}. Finally, as the objective function is strictly convex, the optimal solution \(\hat{u}_i^*\) is unique.
\end{proof}

\begin{remark}
    It is important to remember that \eqref{eq: opt_main_discritized} involves competing stability and safety constraints, such that $\hat\rho_i$ might converge to a local minimum. Thanks to the discretization proposed in \eqref{eq: opt_main_discritized}, the ODE techniques developed in \cite{reis2020control} can be used to characterize the local minima.
\end{remark}

\section{Application and Results}
With a theoretical guarantee for convergence to at least local minima under the state invariance constraint, in this section, the proposed controller is tested in simulations and quadcopter experiments using 4 Seeker quadcopters \cite{seeker_datasheet}. 
While the formulation generalizes to 3D naturally, the tests are conducted in 2D, as many coverage applications focus on 2D manifolds, such as wildfire suppression and agricultural field monitoring. Moreover, inter-robot collision is not modeled explicitly, as avoidance is assumed at the robot level. Additionally, the density matching cost, \(V(\rho)\), discourages collisions by penalizing proximity.

\subsection{Simulation}
The simulation is set up with a team of 15 robots in a $3.5 \times 3.5$~m environment with periodic boundary conditions. The space is discretized into $10 \times 10$~cm squares. The simulations run for $3$ seconds with $0.05$~s timesteps. The robots all have a detection radius of $1$~m and share the same hyperparameter. The green target PDF is centered at \((0.75, -0.3)\), and the two danger areas, denoted by \(\mathcal{A}\), are highlighted in red as shown in Fig.~\ref{fig:Swarm_Time_Sequence}. Note that \(\mathcal{A}\) represents areas, not single obstacles; multiple nearby obstacles can be grouped into one \(\mathcal{A}\) for computational efficiency. 
While the system is designed to handle measurement and motion noise, we first look at a noise-free simulation to study the detailed behavior of the controller. Then, the controller is validated on 4 quadcopters with noise in Section~\ref{subsection: quadcopter}.
\begin{figure}
    \subfloat{\includegraphics[height = 3.7cm]{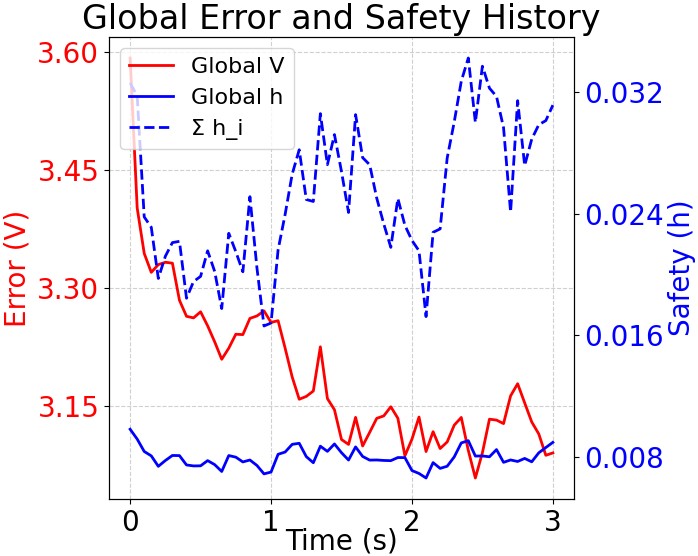}}\hfill
    \subfloat{\includegraphics[height = 3.7cm]{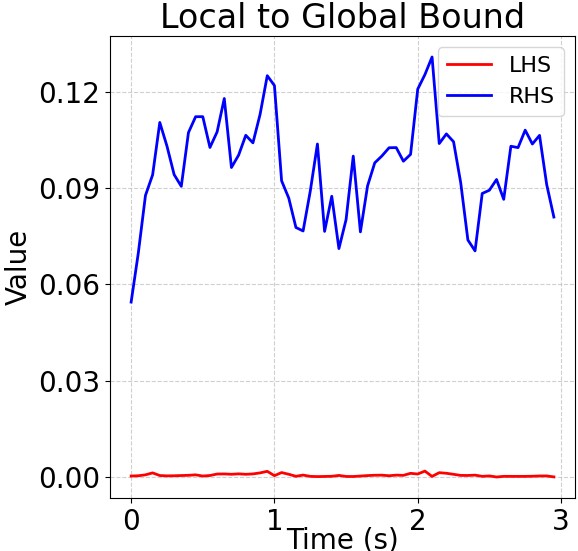}}
    \caption{Density matching, Safety, and Bound Plots. Lower \(V\) value represents better target tracking. Higher \(h\) corresponds to being further from the danger area. The right plot shows the terms in bound \eqref{eq: final bound swarm}.}
    \label{fig: Simulation plot}
\end{figure}

The full controller, as shown in Fig.~\ref{fig: flow diagram} and \eqref{eq: opt_main_discritized}, is implemented in Python. Simulations were performed on an AMD Ryzen 5 5600 6-core processor with Python API Mosek \cite{mosek}. Thanks to the convexity of \eqref{eq: opt_main_discritized}, the average computation time is \(0.003\)~s per robot per timestep, allowing for real-time deployment. 
The result is presented in Fig.~\ref{fig:Swarm_Time_Sequence}. 
As the robots move towards the target, in both the top and bottom left areas, the closest robots move towards the target while their neighbors back off to ensure safety. The next closest robots only start moving towards the target once the first robots exit their detection radii, lowering the local safety threat. Interestingly, in the top left corner of the last two plots, the closest robot moves away from the target instead of passing between the danger zones like the previous robots. This is due to its two neighbors being much closer to \(\mathcal{A}\) than before, and it has no other robots in the detection radius to offset this safety threat. Eventually, most robots converge to the target \(\rho_d\) while avoiding unsafe regions.

The left plot of Fig.~\ref{fig: Simulation plot} shows the global density matching error, \(V(\rho)\), global safety barrier function, \(h(\rho)\), and local safety sum, \(\sum_i^N h_i\), over time. As robots move closer to the danger area, \(h\) decreases but never falls below zero, demonstrating the proposed controller's safety guarantees. 
The local safety sum \(\sum_i^N h_i\) is higher since neighbors not nearby \(\mathcal{A}\) create extra safety padding. Sudden fluctuations are observed when new robots near \(\mathcal{A}\) enter the detection radius, triggering stricter safety requirements. This explains why the third robot, in the top left corner, cannot cross the danger zone. It detects more robots nearby \(\mathcal{A}\) as it tries to pass, but must retreat for safety. The right plot demonstrates the validity of the bound \eqref{eq: final bound swarm}, confirming the guarantee for the global safety CBF. As expected, the inequality is easily satisfied throughout the simulation. The edge case that might violate this bound is with a very small detection radius, and all robots simultaneously move towards \(\mathcal{A}\) at close distances, which is unlikely in practice. 
These results demonstrate that the proposed decentralized controller can maintain set-invariance for safety while matching a target PDF with multiple robots and danger zones.

\subsection{Quadcopter Experiment}
\label{subsection: quadcopter}
To test the performance of the controller in practice, experiments with 4 Seeker quadcopters are done in a $3 \times 3$~m cage. 
Notice that this setup results in a low-density swarm. Although our proposed approach does not require a high-density swarm, performance is expected to increase with a higher number of robots.
The target is centered at \((0.2, 0.6)\) with a circular danger area of radius $0.4$~m at \((-0.4, -0.3)\). Four quadcopters 
%start at \((-0.6, 0.6), (-0.6, - 1.1),(0.6, 1.1),(0.6,-0.6)\) with 
have detection radii set to $1$~m and collision radius to $0.7$~m, as a safety buffer beyond the physical tip-to-tip distance of $0.5$~m. The setup can be seen in Fig.~\ref{fig: exp_setup}. The items on the floor provide quadcopters with localization features for onboard cameras. Each quadcopter computes its own controller and only has the location information of its neighbors to mimic deployment in the wild. 

\begin{figure}
    \centering
    \subfloat[Experiment setup]{\includegraphics[width=\linewidth]{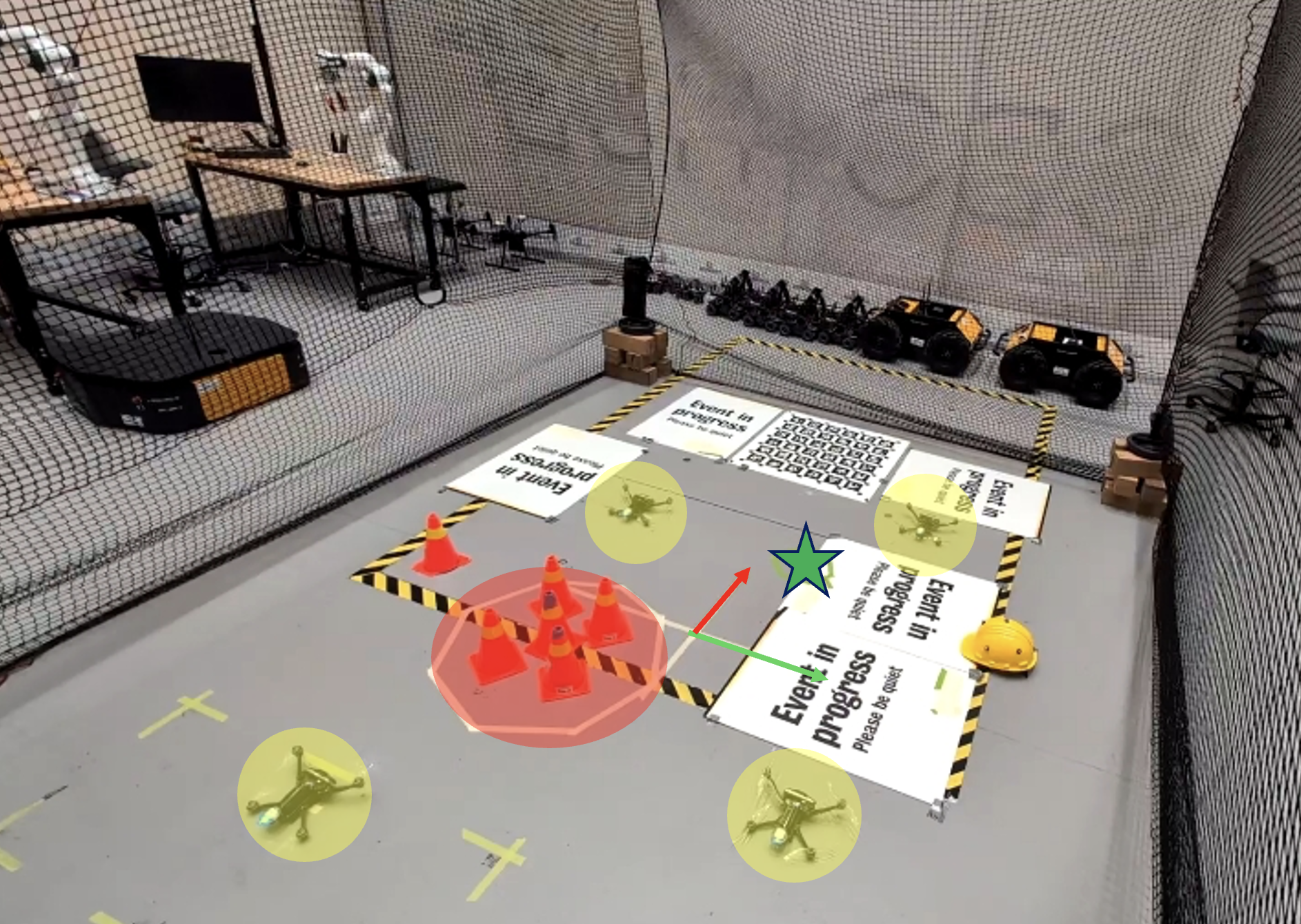}
    \label{fig: exp_setup}}\\    
    \subfloat[Time sequence]{
        \begin{minipage}{\linewidth}
        \centering
        \includegraphics[width=0.335\linewidth]{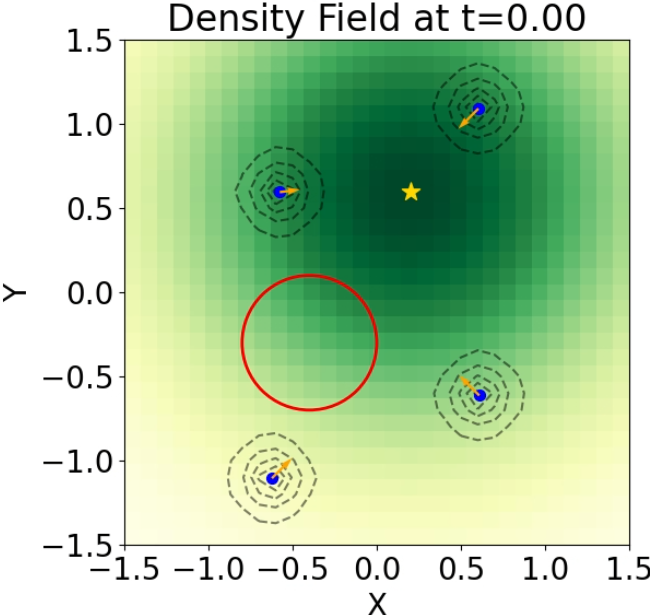}%
        \includegraphics[width=0.335\linewidth]{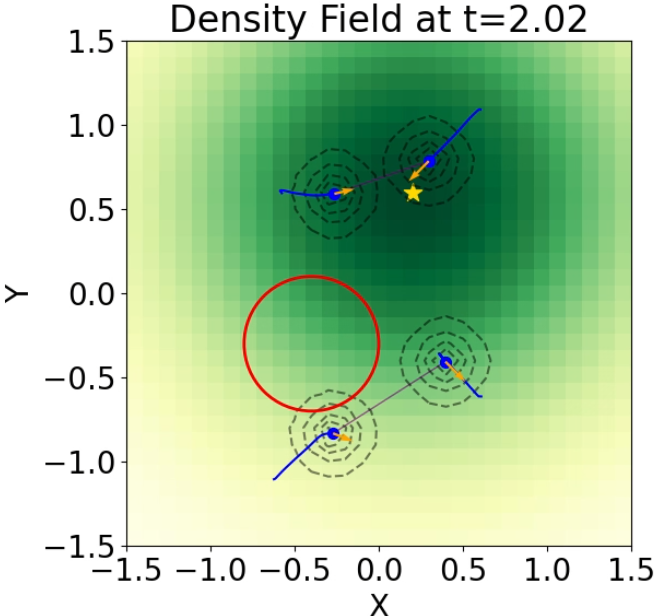}%
        \includegraphics[width=0.335\linewidth]{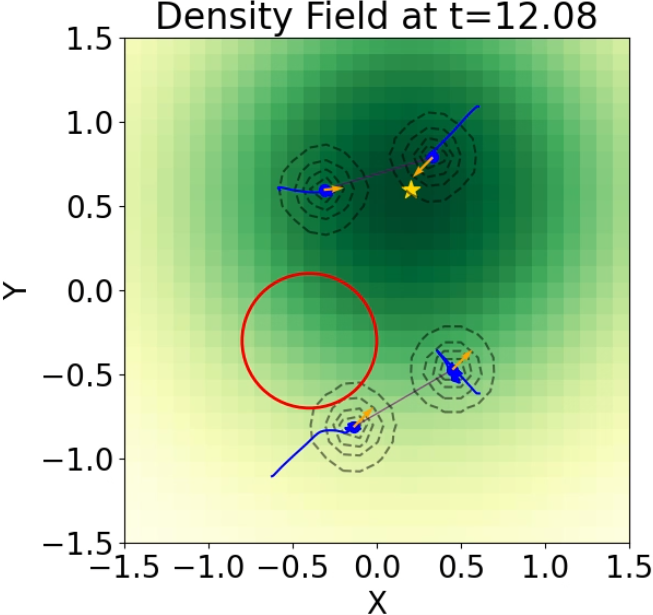}\\
        \includegraphics[width=0.335\linewidth]{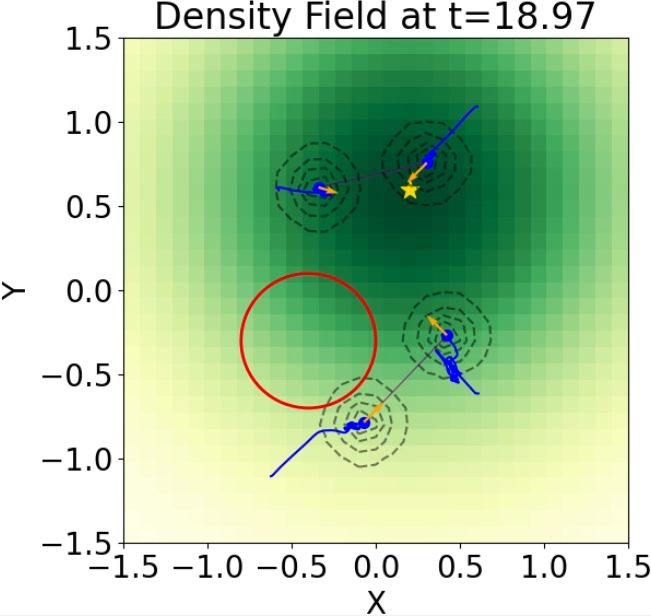}%
        \includegraphics[width=0.335\linewidth]{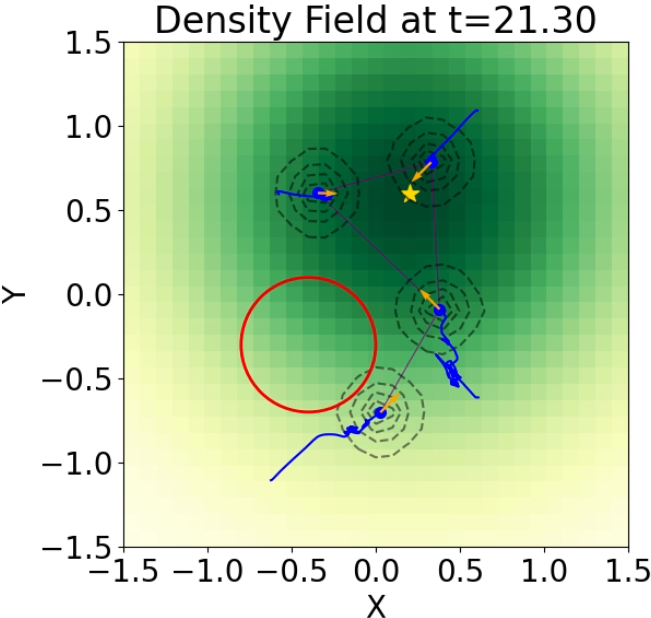}%
        \includegraphics[width=0.335\linewidth]{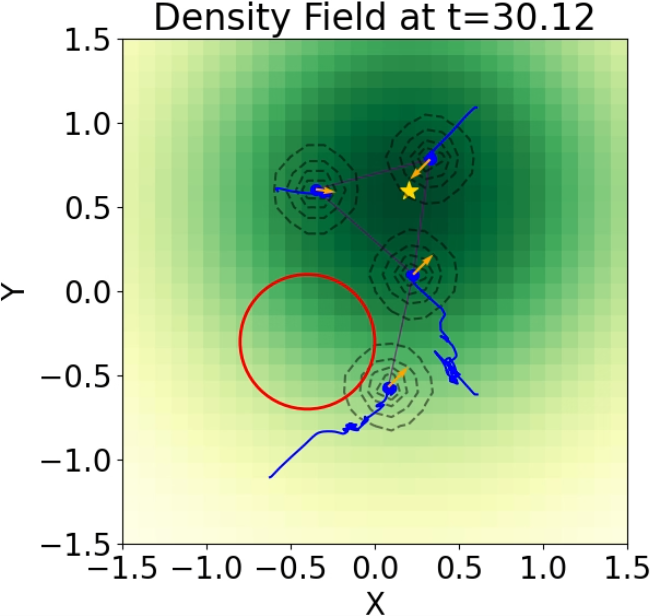}
        \end{minipage}
        \label{fig: quadcopter_Time_Sequence}
        }
    \caption{(a) Experiment setup. Quadcopters in yellow, \(\mathcal{A}\) in red, green X-axis, red Y-axis, and the green star is the center of \(\rho_d\).  (b) Time sequence of the experiment. The star indicates the center of the target, green is the target density (darker is higher), and red is the invariance area. Blue represents robots' positions with trajectory trails, purple communication links, yellow control commands, and the dashed contour is the robot's PDF.}
    \label{fig:combined_experiment}
\end{figure}

The time sequence of the experiment is shown in Fig.~\ref{fig: quadcopter_Time_Sequence}. The top two quadcopters had obstacle-free paths to the target and were able to converge directly, only to stop for collision avoidance. In contrast, the bottom two quadcopters exhibit more interesting behaviors. The right quadcopter's optimal path starts by curving towards the danger area, then straight north towards the target. However, as the left quadcopter enters its detection radius, the right quadcopter backs off because the left quadcopter is already close to the boundary of \(\mathcal{A}\). The left quadcopter, at this point, tries to maintain its path around the danger area, since its next step moves away from \(\mathcal{A}\). However, it stopped due to collision avoidance. 
Eventually, after some oscillations and thanks to the motion noise from air turbulence generated by other quadcopters, the right quadcopter shifted positions and found another locally optimal path to move towards the target. Meanwhile, the left quadcopter continues on its path around \(\mathcal{A}\) since the collision risk disappears with the right quadcopter's repositioning. 

\begin{figure}
    \centering
    \subfloat{\includegraphics[height = 3.7cm]{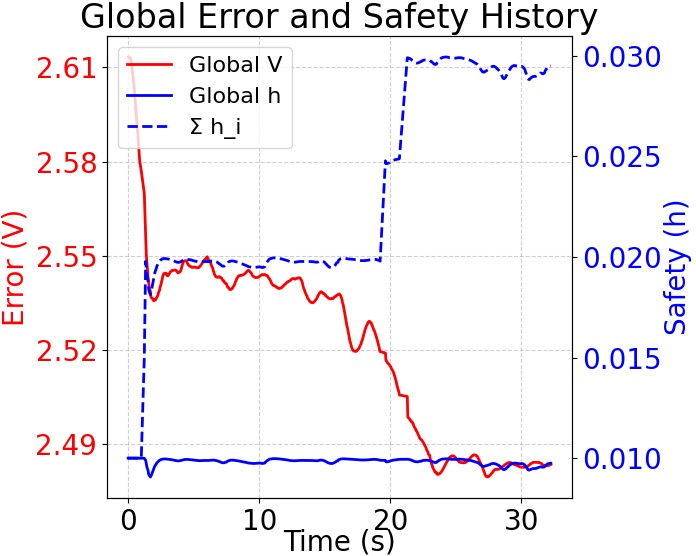}}\hfill
    \subfloat{\includegraphics[height = 3.7cm]{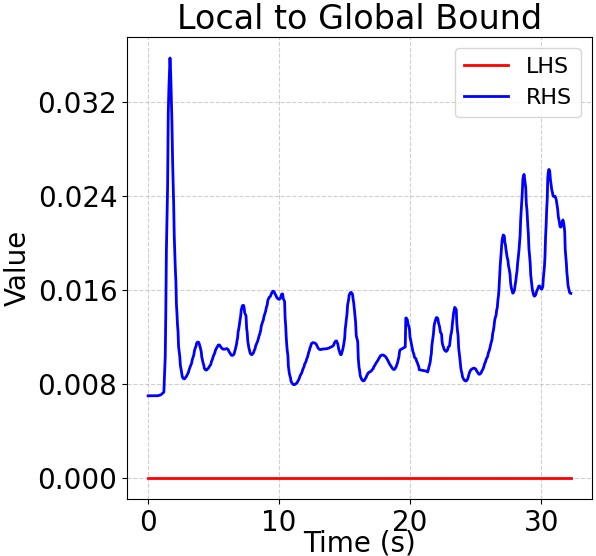}}
    \caption{Density matching, Safety, and Bound Plots. Lower \(V\) value represents better target tracking. Higher \(h\) corresponds to being further from the danger area. The right plot shows the terms in bound \eqref{eq: final bound swarm}.}
    \label{fig: quadcopter results}
\end{figure}
The corresponding metrics are plotted in Fig.~\ref{fig: quadcopter results}. The global safety value drops initially as the bottom left quadcopter moves towards \(\mathcal{A}\), but quickly recovers as both bottom quadcopters act safely.
When a robot, far from \(\mathcal{A}\), is detected by robot \(i\), the local safety \(h_i\) jumps.
The density matching error drops quickly at the beginning as the top two quadcopters converge. \(V(\rho)\) plateaus as the bottom two quadcopters are blocked by the controller and collision avoidance. Finally, \(V(\rho)\) reaches a minimum as the bottom quadcopters converge to the target. 
The inequality bound plot shows similar results as the simulation, but with a smaller magnitude difference due to fewer robots. This supports the claim that the controller is safer with more neighbors.

The full video of the experiment is attached and also available at \url{https://youtu.be/fWqWTY1BVX8}. The video shows that all quadcopters are influenced by heavy motion noise due to air turbulence from other quadcopters. This strong wind vibration also causes localization difficulties with the onboard cameras. Despite these challenges of noise, the controller was still able to guide the team toward the target PDF while ensuring global safety with only position data from neighbors in a limited detection radius. 

\section{Conclusion}
This paper presented a novel decentralized density controller for multi-robot systems that enforces set invariance to achieve robot safety. We outlined the framework for implementation and derived sufficient conditions for the decentralized safety constraints to guarantee global safety. The effectiveness of the OBC was demonstrated in both simulations and quadcopter experiments, despite measurement and motion noise. We envision this controller as an important step towards deploying teams of robots for safety-critical density coverage tasks, such as forest firefighting, pesticide deployment, terrain exploration, and search and rescue.

\bibliographystyle{IEEEtran}
\bibliography{reference}
\end{document}